\theoremstyle{thmstyleone}%
\newtheorem{theorem}{Theorem}
\theoremstyle{thmstyletwo}%
\newtheorem{remark}{Remark}%
\theoremstyle{thmstylethree}%
\newcommand{\TASEP}{\textsc{Tasep}\xspace}
\newcommand{\forwardmove}{forward\xspace}
\newcommand{\pullbackmove}{pullback\xspace}
\newcommand{\activity}{pointer\xspace} 
\newcommand{\eq}[1]{Eq.~(\ref{#1})}
\newcommand{\eqtwo}[2]{Eqs~(\ref{#1}) and~(\ref{#2})}
\newcommand{\tab}[1]{Table~\ref{#1}}
\newcommand{\lem}[1]{Lemma~\ref{#1}} 
\newcommand{\theo}[1]{Theorem~\ref{#1}}
\newcommand{\OCAL}{\mathcal{O}}  
\newcommand{\alphabar}{\overline{\alpha}}  
\newcommand{\Atilde}{\widetilde{A}}
\newcommand{\Btilde}{\widetilde{B}}
\newcommand{\glb}{\left(}  
\newcommand{\grb}{\right)}  
\newcommand{\glc}{\left[}  
\newcommand{\grc}{\right]}  
\newcommand{\const}{\text{const}}
\newcommand{\TO}{,\ldots,}
\newcommand\bigOb[1]{\ensuremath{\OCAL\glb #1 \grb}}
\newcommand{\fraca}[2]{#1 / #2}
\newcommand{\fracd}[2]{\dfrac{#1}{#2}}
\newcommand{\dSSEP}[2]{d^{\textsc{Ssep}}_{#1, #2}}
\newcommand{\dSSEPwall}[2]{d^{|\textsc{Ssep}|}_{#1, #2}}
\newcommand{\dLTASEP}{d^{\textsc{l-T}}}
\newcommand{\OmegaTASEP}{\Omega^{\textsc{Tasep}}}
\newcommand{\OmegaLTASEP}{\Omega^{\textsc{l-T}}}
\newcommand{\OmegaSSEP}{\Omega^{\textsc{Ssep}}}
\newcommand{\SSEP}{\textsc{Ssep}\xspace}
\newcommand{\ZERO}{\fbox{\phantom{$\bullet$}}}
\newcommand{\ONE}{\fbox{$\bullet$}}
\newcommand{\DOTS}{\fbox{$\!\!\cdots\!\!$}}
\newcommand{\TWO}{\fbox{$\mathrlap{\mkern-4mu \rightarrow}\bullet$}}
\newcommand{\THREE}
{\fbox{$\mathrlap{\mkern-4mu\rightarrow} \mathrlap{\mkern-5mu
\leftarrow}\bullet$}}
\newcommand{\NEG}{\!\!\!\!\!\!\!}
\newcommand{\config}[1]{\{ #1 \}}
\newcommand{\ConfThree}{C}
\newtheorem{conjecture}{Conjecture}
\newtheorem{lemma}{Lemma}
\begin{document}

\title[Diameters of symmetric and lifted simple exclusion
models]{Diameters of symmetric and lifted simple exclusion
models}

\author*[1]{\fnm{Xusheng} \sur{Zhang}}\email{xushengz@psu.edu}

\author[2,3]{\fnm{Werner} \sur{Krauth}}\email{werner.krauth@ens.fr}

\affil*[1]{\orgdiv{Department of Computer Science and Engineering},
\orgname{Pennsylvania State University},
 \orgaddress{\street{W342 Westgate Building, Penn State}, \city{University
Park},
 \postcode{16802}, \state{PA}, \country{USA}}}

\affil[2]{\orgdiv{Laboratoire de Physique de l’Ecole normale sup\'erieure},
\orgname{ENS,  Universit\'e PSL, CNRS, Sorbonne Universit\'e, Universit\'e de
Paris Cit\'e},
\orgaddress{\street{24 rue Lhomond}, \city{Paris}, \postcode{75005},
\country{France}}}

\affil[3]{\orgdiv{Rudolf Peierls Centre for Theoretical Physics},
\orgname{University of Oxford},
\orgaddress{\street{Oxford OX1 3PU}, \country{UK}}}


\abstract{We determine diameters of Markov chains describing one-dimensional
$N$-particle models with an exclusion interaction, namely  the \SSEP (symmetric
simple exclusion process) and one of its non-reversible liftings, the lifted
\TASEP (totally asymmetric simple exclusion process). The diameters provide
lower bounds for the mixing times, and we discuss the implications of our
findings for the analysis of these models.}

\keywords{Markov chains, Symmetric simple exclusion process (SSEP), Totally
asymmetric simple exclusion process (TASEP), Non-reversibility, Monte Carlo
algorithms}



\maketitle

\section{Exclusion models}
\label{sec:ExclusionModels}
One-dimensional exclusion models describe $N$ hard-sphere
particles moving on a one-dimensional lattice.
The original model is the symmetric simple exclusion model
(\SSEP)~\cite{Spitzer1970}, which implements a discrete Markov chain based on
the Metropolis algorithm. A configuration $x=\config{x_1, x_2, \dots, x_N}$ in
the sample space $\OmegaSSEP$ is an ordered $N$-tuple with $x_1 < x_2< \dots <
x_N$ of the positions of the $N$ particles, which are indistinguishable. In the
\SSEP, a single move, from time $t$ to time $t+1$, first samples uniformly at
random an active particle $i$ that attempts to move, and then proposes, with
equal probability, a forward or a backward move. If this move would violate the
exclusion condition, it is rejected:
\begin{equation}
\underbrace{\ZERO \ONE \ONE \ZERO \ONE}_{x_t \in \OmegaSSEP}
 \rightarrow
\underbrace{
\ZERO \THREE \ONE \ZERO \ONE}_{\text{choice of  active $i$}}
\rightarrow
\underbrace{ \begin{cases}
\ONE \ZERO \ONE \ZERO \ONE & p= \tfrac12 \\
\ZERO \ONE \ONE \ZERO \ONE & p= \tfrac12
\end{cases} }
_{x_{t+1} \in \OmegaSSEP}.
\label{equ:SSEPDefinition}
\end{equation}
The \SSEP is well-defined both for periodic and hard-wall boundary conditions.

The much studied \TASEP (totally asymmetric simple exclusion
model)~\cite{ChouTASEP2011} differs
from the \SSEP in that, at each time step,
a forward move is proposed:
\begin{equation}
\underbrace{\ZERO \ONE \ONE \ZERO \ONE}_{x_t \in \OmegaTASEP}
 \rightarrow
 \begin{cases}
\ZERO \TWO \ONE \ZERO \ONE \rightarrow \ZERO \ONE \ONE \ZERO \ONE& p = 1/N \\
\ZERO \ONE \TWO \ZERO \ONE \rightarrow \ZERO \ONE \ZERO \ONE \ONE& p = 1/N \\
\underbrace{\ZERO \ONE \ONE \ZERO \TWO}_{\text{choice of active $i$}}
\rightarrow
\underbrace{\ONE \ONE \ONE \ZERO \ZERO}_{x_{t+1} \in
\OmegaTASEP} & p
= 1/N
 \end{cases}.
\label{equ:TASEPDefinition}
\end{equation}
Here, we require periodic boundary conditions because of the fixed direction of
moves.
 The
\TASEP can be interpreted as one half of a lifting~\cite{Chen1999} of the
\SSEP~\cite{Krauth2021eventchain}. In this paper, we consider the lifted
\TASEP~\cite{Essler2023lifted,Lei2019}.
A configuration $x$ in $\OmegaLTASEP$ again consists of an ordered $N$-tuple
$\config{x_1, x_2, \dots, x_N}$  with $x_1 < x_2\dots < x_N$, and in addition
the position $x_i \in x$ of the active particle $i$, which is the only one that
can move:
\begin{align}
\ZERO \ONE \TWO \ZERO \ONE&
 \rightarrow
\ZERO \ONE \ZERO \TWO \ONE
\rightarrow
\begin{cases}
\ZERO \TWO \ZERO  \ONE \ONE & p = \alpha\quad \text{(\emph{pullback} move)}\\
\ZERO \ONE \ZERO \TWO \ONE & p = \alphabar\quad \text{(\emph{forward} move)}
\end{cases}
\label{equ:LTASEPDefNoBlock} \\
\underbrace{\ZERO \TWO \ONE \ZERO \ONE}_{x_t \in \OmegaLTASEP}&
 \rightarrow
\underbrace{
\ZERO \ONE \TWO \ZERO \ONE}_{\text{deterministic}}
\rightarrow
\underbrace{\begin{cases}
\ZERO \TWO \ONE \ZERO \ONE & p = \alpha \quad \text{(pullback move)} \\
\ZERO \ONE \TWO \ZERO \ONE & p = \alphabar \quad \text{(forward move)}
            \end{cases}
}_{x_{t+1} \in
\OmegaLTASEP},
\label{equ:LTASEPDefBlock}
\end{align}
where $\alphabar = 1 - \alpha$.
As sketched in \eqtwo{equ:LTASEPDefNoBlock}{equ:LTASEPDefBlock},
from a sample $x_t$, the active particle $i$ first advances
deterministically, but if
it is blocked ($x_{i+1} = x_i + 1$, with periodic boundary conditions in $i$
and $x$), it
passes the \emph{\activity} on to the blocking particle $i+1$.
Then, the move from $x_t$ to $x_{t+1}$ is completed by transferring the
\activity with probability $\alpha$ to the preceeding particle (\pullbackmove
move from $x_t$ to $x_{t+1}$) or otherwise with probability $\alphabar$, by
taking no further
action (\forwardmove move from $x_t$ to $x_{t+1}$) (see the right-hand sides of
\eqtwo{equ:LTASEPDefNoBlock}{equ:LTASEPDefBlock}). The transition matrix of
this Markov chain is doubly stochastic, so that in the stationary state,
all configurations $x \in \OmegaLTASEP$ are equally
probable~\cite{Essler2023lifted}.

The lifted \TASEP is a one-dimensional model for the event-chain Monte Carlo
algorithm~\cite{Bernard2009} and its parameter $\alpha$ corresponds to a factor
field~\cite{Lei2019}. The model is integrable by Bethe
ansatz~\cite{Essler2023lifted}, although many questions remain. Compared to the
\SSEP, the lifted \TASEP can also be seen as a kinetically constrained lattice
gas~\cite{Cancrini2010}, as each of its configurations is connected only to two
other configurations, whereas in the \SSEP, the connectivity is $2N$.
Nevertheless, we will find that the diameter of the lifted \TASEP is only
roughly double that of the \SSEP. Non-rigorous exact enumerations
for small system sizes indicate that for a critical value of the pullback
$\alpha$, the mixing time scales as $\bigOb{N^2}$ (for $N \propto L$), the same
scaling as we derive here rigorously for the diameter.
In \eqtwo{equ:SSEPDefinition}{equ:TASEPDefinition}, configurations are
represented by the positions of the particles and in
\eqtwo{equ:LTASEPDefNoBlock}{equ:LTASEPDefBlock}, an additional pointer is
introduced. Equivalently, one may also adopt a height representation $H(x)$,
where $H(0)=0$ and
\begin{equation}
 H(x) =
\begin{cases}
H(x-1)& \text{if no particle at $x$}\\
H(x-1) + 1 & \text{else (particle at $x$)}
\end{cases} \quad \forall x \in (1,2,\TO L),
\end{equation}
so that $H(L)=N$. We will however not make use of the height representation.

\section{Diameters of exclusion models}
\label{sec:Diameter}

In a sample space $\Omega$, the diameter is defined  as
\begin{equation}
d = \max_{X,Y\in
\Omega} d(X,Y),
\label{equ:DiameterDefinition}
\end{equation}
where $d(X, Y)$ is the smallest number of moves from $X$ to
$Y$~\cite{Levin2017}. For non-reversible Markov chains, $d(X,Y) $ may differ
from $d(Y,X)$.  The diameter bounds the mixing time from below by $d/2$. The
diameter of a lifted Markov chain cannot be smaller than that of its
\emph{collapsed} chain (the chain of which it is a lifting). Although we are
interested in the diameter of the lifted \TASEP, we therefore first compute that
of the \SSEP. On the other hand, the diameter of a lifted Markov chain can be
much larger than that of its collapsed chain. As an example, the lifted \TASEP
for $\alpha=0$ is not irreducible, so that its diameter is infinite  for finite
$N$ and $L$. Naturally, the diameter of the lifted \TASEP is independent of
$\alpha$, for $0 < \alpha < 1$.

\subsection{Diameter of the SSEP}
We establish the diameter $\dSSEPwall{N}{P}$ of the \SSEP with hard-wall
boundary conditions and, analogously, the diameter $\dSSEP{N}{P}$ for the
periodic \SSEP (see \tab{tab:DiameterSSEP} for examples)
\begin{table}[h]
    \centering
		\begin{tabular}{c|cccccc}
		$L \setminus N$ &1 & 2 &3  &4  &5  &6  \\
			\hline
			4& 2  &2  &2  &0  &  \\
			5&2   &3  & 3 & 2 & 0 \\
			6& 3 & 4 & 5 & 4 & 3 & 0 \\
			7& 3 & 5 & 6 & 6 & 5& 3\\
			8& 4 & 6 & 8 & 8 & 8 & 6\\
			9 & 4 & 7 & 9 & 10 & 10 &9 \\
			10& 5 & 8 & 11 & 12 & 13 & 12\\
			11 & 5 & 9 & 12 & 14 & 15 & 15\\
			12 & 6 & 10 & 14 & 16 & 18 & 18
	   \end{tabular}
    \caption{Diameter $\dSSEP{N}{L}$
    of the \SSEP with periodic boundary conditions, obtained
by exact enumeration. Data are compatible with \eq{equ:SSEPPeriodic}.
}
\label{tab:DiameterSSEP}
\end{table}

\begin{theorem}
\label{thm:diamSSEP}
Let $N \le L$. For hard-wall boundary conditions, we have
\begin{align}
\dSSEPwall{N}{L} &= N(L-N)  \quad \text{(hard-wall boundary conditions)},
\label{equ:SSEPHardWall} \\
\intertext{while, for periodic boundary conditions, the following is true:}
\dSSEP{N}{L}  &=
\left\lceil \fracd{N(L-N)}{2} \right\rceil \quad \text{(periodic
boundary conditions)}.
\label{equ:SSEPPeriodic}
\end{align}
\end{theorem}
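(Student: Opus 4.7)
The plan is to define, for each boundary condition, a distance function $\phi$ on configuration pairs that (i) is a lower bound for $d(X,Y)$ because any single \SSEP move changes $\phi$ by at most one, (ii) is achievable by an explicit greedy trajectory, and (iii) has a closed-form maximum over $(X,Y)$ equal to the claimed diameter.

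\textbf{Hard-wall case.} Label particles in increasing order and set $\phi(X,Y)=\sum_{i=1}^N |x_i-y_i|$. The non-crossing property of \SSEP implies that each move alters exactly one $|x_i-y_i|$ by $\pm 1$, giving $d(X,Y)\ge\phi(X,Y)$. For the matching upper bound I would proceed greedily: pick the smallest index $i$ with $x_i\ne y_i$ and move particle $i$ toward $y_i$; a short inductive check shows that either this particle is unobstructed or the obstructing neighbour can itself be first moved one step toward its own target, so no moves are wasted. Finally $x_i,y_i\in\{i,\dots,L-N+i\}$ yields $|x_i-y_i|\le L-N$, hence $\phi\le N(L-N)$, with equality at $X=(1,\dots,N)$, $Y=(L-N+1,\dots,L)$.

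\textbf{Periodic case: lower bound and achievability.} Fix a cut, let $H_X,H_Y$ be the cumulative particle counts, and set $\Delta(x)=H_Y(x)-H_X(x)$, so $\Delta(0)=\Delta(L)=0$. For any trajectory from $X$ to $Y$ let $W$ be the signed net number of particle crossings of the cut bond. Discrete flow conservation forces the net forward current across the bond $(x-1,x)$ to equal $W-\Delta(x)$, so the total number of moves is at least $\sum_{x=0}^{L-1}|W-\Delta(x)|$. Minimising over the free integer $W$ gives $d(X,Y)\ge\min_W\sum_x |W-\Delta(x)|$. To show the bound is tight I would fix $W^\ast$ attaining the minimum and realise the implied per-bond currents by a scheduling argument, essentially two copies of the hard-wall greedy run on the clockwise and counterclockwise arcs selected by $W^\ast$, without any wasted move.

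\textbf{Periodic case: maximisation.} The profile $\Delta$ is a walk from $0$ to $0$ over $L$ steps with steps in $\{-1,0,+1\}$, having $m=|X\setminus Y|$ up-steps and $m$ down-steps with $m\le\min(N,L-N)$. The antipodal pair $X=(1,\dots,N)$ and $Y$ obtained by shifting $X$ by $\lfloor L/2\rfloor$ modulo $L$ produces the trapezoidal $\Delta$ that descends monotonically to $-\min(N,L-N)$, stays flat, and returns to $0$; a direct computation on this profile gives $\min_W\sum_x|W-\Delta(x)|=\lceil N(L-N)/2\rceil$, supplying the lower bound on the diameter. For the matching upper bound I plan a rearrangement argument: show that any valid $\Delta$ can be transformed into the trapezoidal profile by a sequence of local swaps of adjacent up-/down-steps that do not decrease $\min_W\sum_x|W-\Delta(x)|$. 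The main obstacle I expect is precisely this monotonicity under each swap, since both the multiset $\{\Delta(x)\}_x$ and the optimal shift $W^\ast$ move simultaneously; I plan to control this by tracking how the swap changes the level-set counts $|\{x:\Delta(x)>k\}|$ and invoking the fact that the $\ell_1$-distance-to-the-median is determined by those counts. Particle-hole duality lets us restrict throughout to $N\le L/2$.
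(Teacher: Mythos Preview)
Your hard-wall argument is essentially the paper's: both use $\phi(X,Y)=\sum_i|x_i-y_i|$ as a potential that drops by at most one per move, prove achievability by an inductive/greedy scheme (the paper formalises this via a ``flexible particle'' whose path to its target is unobstructed), and exhibit the packed-left/packed-right pair for the maximum.

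The periodic case is where you and the paper genuinely diverge. Your flow/height-function lower bound $d(X,Y)\ge\min_{W\in\mathbb Z}\sum_x|W-\Delta(x)|$ is correct and is arguably cleaner than the paper's particle-tracing argument for the antipodal pair; evaluating it on that pair immediately gives the diameter lower bound. For the diameter \emph{upper} bound, however, the paper does \emph{not} go through your two hard steps (achievability of the transport cost for every pair, then maximisation over all admissible $\Delta$-profiles). Instead it proves a short ``interval lemma'': by an intermediate-value argument on translates, there is always a periodic interval of length $\lfloor L/2\rfloor$ or $\lfloor L/2\rfloor+1$ containing the same number of particles of $X$ and of $Y$. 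In your language this produces two cut points roughly $L/2$ apart at which $\Delta$ takes the same value; choosing $W$ equal to that common value splits the problem into two hard-wall instances with parameters $(k,L_1)$ and $(N-k,L-L_1)$, and the elementary bound $k(L_1-k)+(N-k)(L_2-(N-k))\le\lceil N(L-N)/2\rceil$ finishes. This sidesteps both the scheduling argument and the rearrangement entirely.

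Your rearrangement plan is the one place I would flag as risky rather than merely unfinished. A bubble-sort swap toward the trapezoidal profile replaces a local $(\text{up},\text{down})$ by $(\text{down},\text{up})$, which lowers a single value $\Delta(i)$ by $2$; if that value sat above the current median this \emph{decreases} the $\ell_1$-distance-to-median, so monotonicity does not hold swap-by-swap in general. You can likely rescue the argument by choosing the swap order carefully (always act on the leftmost up-step, so the value being lowered is already at the running maximum and hence never strictly above the median of the remaining profile), but this needs a real proof. If your goal is only the diameter, the interval-lemma route is much shorter; if you want the stronger statement $d(X,Y)=\min_W\sum_x|W-\Delta(x)|$ for every pair, that is more than the paper claims and is worth pursuing, but then the achievability step carries the weight and the maximisation is no longer needed.
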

Before proving  \theo{thm:diamSSEP}, we
provide an upper bound for the diameter which is independent of boundary
conditions (\lem{lem:SSEPmovex2y}), as well as a tool to reduce
the periodic \SSEP to the hard-wall \SSEP (\lem{lem:interval}).

\begin{lemma}
\label{lem:SSEPmovex2y}
Let $x=\{x_1, x_2, \dots, x_N\}$ and $y=\{y_1, y_2, \dots, y_N\}$
be two configurations in $\OmegaSSEP$. Then, for both periodic and hard-wall
boundary conditions, we have
\begin{equation}
d^{\textsc{Ssep}} (x,y) \le \sum_{i=1}^N |y_i - x_i|.
\end{equation}
Consequently, $\dSSEP{N}{L}, \dSSEPwall{N}{L}\le N(L-N)$.
\end{lemma}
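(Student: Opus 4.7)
The plan is to exhibit, for any two configurations $x$ and $y$ in $\OmegaSSEP$, an explicit sequence of legal \SSEP moves from $x$ to $y$ of length at most $\Phi(x,y) := \sum_{i=1}^N |y_i - x_i|$, by showing that $\Phi$ can always be strictly decreased by one per move via a greedy rule. The rule is: while $x\ne y$, if $\{i : y_i > x_i\} \ne \emptyset$ (some particle still needs to move right), select the \emph{largest} such index $i_0$ and move particle $i_0$ one step to the right; otherwise some particle needs to move left, in which case select the \emph{smallest} index $i_0$ with $y_{i_0}<x_{i_0}$ and move particle $i_0$ one step to the left.

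The core of the proof is to verify legality of each greedy move. In the rightward case, suppose for contradiction that particle $i_0 < N$ is blocked, i.e.\ $x_{i_0+1} = x_{i_0}+1$. By maximality of $i_0$ one has $y_{i_0+1} \le x_{i_0+1} = x_{i_0}+1$, while the strict sortedness of $y$ together with $y_{i_0} \ge x_{i_0}+1$ forces $y_{i_0+1} \ge y_{i_0}+1 \ge x_{i_0}+2$, a contradiction. The edge case $i_0 = N$ reduces to $x_{i_0} < y_{i_0} \le L$, so the move does not hit the hard wall and, in the periodic case, does not wrap around the cycle. The leftward case is symmetric via minimality of $i_0$ and the analogous inequalities on $x_{i_0-1}$ and $y_{i_0-1}$. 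Each such move alters only $x_{i_0}$, hence only the term $|y_{i_0}-x_{i_0}|$ in $\Phi$, and by the chosen direction decreases it by exactly one.

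Iterating the greedy step yields a sequence of at most $\Phi(x,y)$ legal moves ending at $y$, which proves the main inequality simultaneously for both boundary conditions. For the consequence, sortedness $1 \le x_1 < \cdots < x_N \le L$ (and similarly for $y$) implies $i \le x_i \le L-N+i$ and the same for $y_i$, so $|y_i - x_i| \le L-N$ for every $i$, and therefore $\Phi(x,y) \le N(L-N)$.

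The main obstacle I anticipate is the joint use of the maximality of $i_0$, the strict sortedness of both $x$ and $y$, and the \SSEP exclusion constraint to rule out blocking; the index-chase itself is short, but it must cleanly cover the boundary indices $i_0 \in \{1,N\}$ and confirm that in the periodic case the greedy move is always within $\{1,\ldots,L\}$ and so never triggers a wrap-around whose legality would require a separate argument.
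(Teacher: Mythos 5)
Your argument is correct, but it takes a genuinely different route from the paper's. The paper proceeds by induction on $N$: it first establishes the existence of a \emph{flexible particle}, i.e.\ an index $i$ such that the interval between $x_i$ and $y_i$ contains no other particle of $x$, moves that particle monotonically to $y_i$, and then splits the problem at $y_i$ into two independent hard-wall sub-instances to which the induction hypothesis applies (the existence of the flexible particle is shown by contradiction). You instead run a single greedy loop governed by the potential $\Phi(x,y)=\sum_i|y_i-x_i|$: advance the largest-index particle that still needs to move right, otherwise retreat the smallest-index particle that needs to move left, and use the extremality of $i_0$ together with the strict sortedness of $x$ and $y$ to show the move is never blocked, so each legal move lowers $\Phi$ by exactly one. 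Both proofs are constructive, use only monotone moves that never cross the periodic boundary (hence treat both boundary conditions simultaneously), and yield the same bound; the paper's version buys a clean recursive decomposition and the reusable flexible-particle notion (which also underlies the algorithmic remark after the theorem), while yours avoids the existence-by-contradiction step and rests on a one-line invariant, arguably more elementary and more directly algorithmic. Two boundary details deserve an explicit word in your write-up: for $i_0=N$ (resp.\ $i_0=1$) legality also requires the target site to be empty, which holds because $x_N$ is the rightmost (resp.\ $x_1$ the leftmost) occupied site and $x_N+1\le y_N\le L$ (resp.\ $x_1-1\ge y_1\ge 1$); and since no particle ever passes another, the labelling is preserved, so the loop terminates with $x_i=y_i$ for all $i$. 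Your derivation of the consequence, via $i\le x_i,y_i\le L-N+i$ so that every term is at most $L-N$, spells out what the paper leaves implicit.
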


\begin{proof}
For  fixed $L$, we prove by induction on $N\le L$ that $x$ can reach $y$ in at
most
$\sum_{i=1}^N |y_i - x_i|$ moves; these moves do not invoke periodic boundary
conditions, and for each particle $i$, the direction of moves taken from $x_i$
to $y_i$ are  \emph{monotone}, that is, either all forward or all backward. The
case $N=1$ is obvious. Suppose the induction holds for $N\le k$. Now consider $N
= k+1$. We claim that there exists a \emph{flexible particle} at $x_i\in x$ such
that there are no particles in $x$ within the interval $[x_i, y_i]$ or $[y_i,
x_i]$ (whichever is a valid interval), except the flexible particle itself.
Before displacing any other particle, we can move the flexible particle from
$x_i$ to $y_i$ using monotone \SSEP moves only, without invoking periodic
boundary conditions. After this movement, we apply the induction hypothesis to
two smaller instances whose number of particles is at most $k$: we can move
particles in $\config{x_1, \dots, x_{i-1}}$ to $\config{y_1, \dots, y_{i-1}}$
in $
\sum_{j=1}^{i-1} |y_j- x_j|$ monotone moves without invoking periodic boundary
conditions; similarly, we can move particles in $\config{x_{i+1}, \dots,
x_{N}}$ to
$\config{y_{i+1}, \dots, y_{N}}$ in $ \sum_{j=i+1}^{N} |y_j- x_j|$ monotone
moves
without invoking periodic boundary conditions. These two instances are
independent of each other because $y_i$ divides the $[1,L]$ into halves. More
precisely,  if $x_i < y_i$ then $x_{i-1}, y_{i-1} < y_i < y_{i+1}$ and by the
property of flexible particle, $x_{i+1} > y_i$; if $x_i > y_i$ then $x_{i+1},
y_{i+1} > y_i > y_{i-1}$ and by the property of flexible particle, $x_{i-1} <
y_{i}$. Hence, we establish the induction.

It remains to show the existence of a flexible particle. For the sake of
contradiction, we assume that there is no flexible particle. If $y_1
\le x_1$, then the particle $i=1$,  at $x_1$,  is flexible. If the
particle $i=1$, at $x_1$, is not flexible,
then we must have $x_1 < x_2 \le
y_1$. This also implies  $x_2 < y_2$. Moreover, if $x_2$ is not flexible,
then we have $x_2 < x_3 \le y_2$. We continue in this fashion and
obtain for each $i=1,2,\dots, N-1$ that $x_i < x_{i+1} \le y_i$. At the end, we
know $x_N \le y_{N-1} < y_N$. Since, by definition,  particle $N$  is the
most forward, it must be flexible, so there is a contradiction.
\end{proof}

With the following lemma, we reduce the \SSEP with periodic
boundary conditions to the one with hard-wall boundary conditions. We denote by
$[a,b]$ the \emph{periodic interval}  of $a$ and $b$, in other words the
interval from $a$ to $b$ if $a\le b$ and, using this definition, $[a,L] \cup
[1,b]$ if $a>b$. 
\begin{lemma}
\label{lem:interval}
Let $x=\config{x_1, x_2, \dots, x_N}$ and $y=\config{y_1, y_2, \dots, y_N}$ be
configurations
in $\OmegaSSEP$. There then exists a periodic interval $[a,b]$
of length $\lfloor L/2 \rfloor$ or $\lfloor L/2 \rfloor + 1$
that contains the same number of particles in $x$ and in $y$.
\end{lemma}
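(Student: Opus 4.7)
My plan is a sliding-window, discrete intermediate-value argument on the cyclic lattice $\{1,\dots,L\}$.

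Let $\ell=\lfloor L/2\rfloor$. For each $a\in\{1,\dots,L\}$, let $I(a)=[a,a+\ell-1]$ (periodic) be the window of length $\ell$ starting at $a$, and define
\[
f(a):=|x\cap I(a)|-|y\cap I(a)|,\qquad g(p):=\mathbbm{1}(p\in x)-\mathbbm{1}(p\in y)\in\{-1,0,1\}.
\]
The first step is a telescoping observation: every site $p\in\{1,\dots,L\}$ belongs to exactly $\ell$ of the windows $I(1),\dots,I(L)$, so $\sum_{a=1}^L f(a)=N\ell-N\ell=0$. In particular, $f$ cannot be strictly positive or strictly negative. The second step is an increment bound: sliding $a\mapsto a+1$ replaces the site $a$ by $a+\ell$, whence $f(a+1)-f(a)=g(a+\ell)-g(a)\in\{-2,-1,0,1,2\}$.

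If $f$ attains the value $0$ at some $a$, the interval $I(a)$ of length $\ell$ already satisfies the conclusion and we are done. Otherwise, by the first observation $f$ must take both signs around the cycle, and by the increment bound the only sign change that skips $0$ is a step $f(a)=+1,\ f(a+1)=-1$ (a jump of $-2$) or the symmetric step $f(a)=-1,\ f(a+1)=+1$ (a jump of $+2$). In the first case, $g(a+\ell)-g(a)=-2$ forces $g(a)=+1$ and $g(a+\ell)=-1$; the enlarged periodic interval $I(a)\cup\{a+\ell\}=[a,a+\ell]$ of length $\ell+1$ then has particle-count difference
\[
f(a)+g(a+\ell)=1+(-1)=0,
\]
which gives the desired interval. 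The symmetric case is identical with the signs flipped.

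The main (mild) obstacle I anticipate is the bookkeeping in this sign-skipping case: one must track both endpoints of the jump $g(a+\ell)-g(a)=\pm 2$ to see that enlarging the window by exactly one site cancels the surplus, yielding a length-$(\ell+1)$ window that equalises the counts. Everything else is a clean consequence of the two one-line facts $\sum_a f(a)=0$ and $|f(a+1)-f(a)|\le 2$.
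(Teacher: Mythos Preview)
Your argument is correct and follows essentially the same sliding-window discrete intermediate-value strategy as the paper: the paper slides windows of length $\lfloor L/2\rfloor+1$ and, at a sign-skipping step, \emph{shrinks} by one site, whereas you slide windows of length $\lfloor L/2\rfloor$ and \emph{enlarge} by one site in that case. Your use of $\sum_a f(a)=0$ to force both signs is a minor variant of the paper's device of starting at the maximising window, but the core idea (step size at most $2$, so a $\pm1\to\mp1$ jump pins down both endpoint occupancies) is the same.
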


\begin{proof}[Proof of \lem{lem:interval}]

With periodic boundary conditions, there are $L$ distinct periodic intervals of
length $m := \lfloor L/2 \rfloor + 1$. Let $S_0 = [a_0, b_0]$ be one of the
intervals such that $|x\cap S_0| - |y\cap S_0|$ achieves its maximum among these
$L$ intervals. Let $S_i := [a_i, b_i]$ be the translation of $S_0$ by $i$, where
$a_i:=(a_0 + i) \mod L$ and $b_i:=(b_0 +i) \mod L$. Let $g(i) := |x \cap S_i| -
|y \cap S_i|$. By definition, $g(0)\ge 0$. If $g(0) = 0$, then we are done, so
we assume $g(0)>0$.

Since both $x$ and $y$ have $N$ particles, if $|x \cap S_0| > |y \cap S_0|$,
then there must be some $S_j$ for which $|x\cap S_j| < |y\cap S_j|$, that is,
$g(j) < 0$. For every $i\ge 1$,
\begin{equation}
g(i) - g(i-1)
= \mathbbm{1}[b_{i} \in x] - \mathbbm{1}[b_{i} \in y] -
\mathbbm{1}[a_{i-1} \in x]
+  \mathbbm{1}[a_{i-1} \in y] \in [-2, 2].
\end{equation}
	Among $i=1,\dots, j$, there is the first $j^*$ such that $g(j^*) < 0$. 
	Since $g(j^*) - g(j^*-1) \le -2$ and $g(j^*-1)\ge 0$, $g(j^* - 1) \in
\{1,0\}$.
	If $g(j^* - 1) = 0$, then $S_{j^*-1}$ is the desired interval of length
$m$.
	Otherwise, $g(j^* - 1) = 1$, and $g(j^*) = -1$, which is only possible when
	\begin{equation}
		b_{j^*} \notin x, b_{j^*} \in y, a_{{j^*}-1}\in x, \text{ and }
a_{{j^*}-1}\notin y.
	\end{equation}
	Then we check that $S^*:=[a_{j^*}, b_{{j^*}-1}]$ is a desired interval. 
	Indeed, 
	\begin{align}
		|S^* \cap x| &= |S_{{j^*}-1} \cap x| - 1 
		= g(j^*-1) + |S_{j^*-1} \cap y| - 1 \\
		&= 1 + |S_{j^*-1} \cap y| - 1 = |S_{j^*-1} \cap y| = |S^* \cap y|,
	\end{align}
	and $S^*$ is of length $m-1$.
\end{proof}

\begin{proof}[Proof of \theo{thm:diamSSEP}]
The diameter of the \SSEP is at most $N(L-N)$ for both boundary conditions,
as follows from \lem{lem:SSEPmovex2y}.

For hard-wall boundary conditions,
moving between $\Atilde$ and $\Btilde$,
\begin{equation}
\Atilde = \overset{1}{\ONE} \ONE \DOTS \overset{N}{\ONE} \ZERO  \DOTS \ZERO
\overset{L}{\ZERO}, \quad \Btilde= \ZERO \ZERO
\DOTS \ZERO
\overset{\NEG L-N+1 \NEG}{\ONE} \DOTS
\ONE \overset{L}{\ONE},
\end{equation}
requires $N(L-N)$ moves, because every particle in $\Atilde$ has to be
translated by $L-N$ moves to yield its corresponding particle in  $\Btilde$.
Therefore, \eq{equ:SSEPHardWall} is established, that is, the first claim of
the theorem.

For the periodic \SSEP, we first consider the case of even $L$.
\lem{lem:interval} implies that there exists a periodic
interval $[a,b] \subseteq [1,L]$ of length $L/2$ or $L/2 + 1$ with
$k$ particles both in $x$ and in $y$.
By the periodic boundary conditions, $[1,L] \setminus [a,b]$ is also a
periodic interval $[c,d]$ of length $L/2$ or $L/2-1$.
In both configurations $x$ and $y$, there are $N-k$ particles in $[c,d]$.
Then we obtain two smaller instances of the \SSEP with hard-wall boundary
conditions: on $[a,b]$, $N_1 = k$ and $L_1 = L/2$ (or $L/2 + 1$),
and on  $[c,d]$, $N_2 = N-k$ and $L_2 = L - L_1 = L/2$ (or $L/2 - 1$).
Moreover,
\begin{equation}
	\dSSEP{N}{L} \le
\dSSEPwall{N_1}{L_1} + \dSSEPwall{N_2}{L_2}.
\end{equation}
Two cases must be considered for the length $L_1$.

In the case $L_1 = L_2 = L/2$, we may
apply \lem{lem:SSEPmovex2y} to both instances and obtain:
\begin{align}
\dSSEPwall{N_1}{L_1} + \dSSEPwall{N_2}{L_2} &\le N_1  (L_1 - N_1) + N_2
 (L_2 - N_2) \\
& = k  \left(  \frac{L}{2} - k\right) + (N-k)   \left[
\frac{L}{2}- (N-k)\right]\\
& = \left[k \cdot \frac{L}{2}+ (N-k) \cdot\frac{L}{2}\right] - k^2 -
(N-k)^2 \\
& \le N \cdot \frac{L}{2} - \frac{N^2}{2} =  \left\lceil \frac{N(L-N)}{2}
\right\rceil,
\end{align}
where the last inequality follows from the fact that $\fraca{(x+y)^2}{2} \le
x^2 + y^2$ for any $x,y\in \mathbb{R}$.

Alternatively, in the case $L_1 = L/2 + 1$ and $L_2 = L/2 - 1$, a
similar computation yields for the hard-wall diameters:
\begin{align}
\dSSEPwall{N_1}{L_1} + \dSSEPwall{N_2}{L_2} &\le N_1  (L_1 - N_1) + N_2
 (L_2 - N_2) \\
& = k  \left(  \frac{L}{2} + 1 - k\right) + (N-k)   \left[
\frac{L}{2}- 1 - (N-k)\right]\\
& = \left[k \cdot \frac{L}{2}+ (N-k) \cdot \frac{L}{2}\right]
-
\underbrace{
\glc
 k^2 + (N-k)^2 - k + (N-k)\grc}_{f_1(k)},
\end{align}
where $f_1(k):\mathbb{R} \rightarrow \mathbb{R}$ given by $f_1(k)=k^2 + (N-k)^2
-
k + (N-k)$.
By checking the derivative of $f_1$, we see that for any $k \ge 0$,
\begin{equation}
	f_1(k) \ge f_1\left(\frac{N-1}{2}\right) = \frac{N^2-1}{2}.
\end{equation}
Thus,
\begin{equation}
	\dSSEP{N}{L} \le\dSSEPwall{N_1}{L_1} + \dSSEPwall{N_2}{L_2} \le
	\frac{NL}{2} - f_1(k)
			\le \frac{NL}{2} -  \frac{N^2-1}{2}
			= \frac{N(L-N)}{2} + \frac{1}{2}.
\end{equation}
If $N$ is odd, 
then 
\begin{equation}
	\dSSEP{N}{L} \le 	 \frac{N(L-N)}{2} + \frac{1}{2} = \left\lceil
\frac{N(L-N)}{2} \right\rceil.
\end{equation}
If $N$ is even, then
\begin{equation}
	\dSSEP{N}{L}\le 	 \frac{N(L-N)}{2} + \frac{1}{2} = \left\lceil
\frac{N(L-N)}{2} \right\rceil + \frac{1}{2}.
\end{equation}
Since $\dSSEP{N}{L} $ is an integer and $ \left\lceil \frac{N(L-N)}{2}
\right\rceil $ in this case is the largest integer satisfying the
above inequality, we have
\begin{equation}
	\dSSEP{N}{L} \le  \left\lceil \frac{N(L-N)}{2} \right\rceil.
\end{equation}

When $L$ is odd,
\lem{lem:interval} in turn implies the
existence of a periodic interval $[a,b] \subseteq [1,L]$
of length $\lfloor L/2 \rfloor$ or $\lceil L/2 \rceil$
with the same number of particles in $x$ and in $y$.
In either case, we can choose to set $L_1 = \lfloor L/2 \rfloor, N_1 = k,L_2 =
\lceil L/2 \rceil$ and
$N_2 = N - k$ for some $k\in [0,N]$.
By an analogous computation, we have
\begin{equation}
	\dSSEP{N}{L} \le\dSSEPwall{N_1}{L_1} + \dSSEPwall{N_2}{L_2}
\le \frac{NL}{2} -  \underbrace{\glc  \left(k+\frac{1}{2}\right)k +
(N-k)\left(N-k + \frac{1}{2}\right)\grc}_{f_2(k)},
\end{equation}
with $f_2(k) = \left(k+\frac{1}{2}\right)k + (N-k)\left(N-k-\frac{1}{2}\right)$.
Again, by checking the derivative of $f_2$, we obtain that for any $k\ge 0$, 
\begin{equation}
f_2(k) \ge f_2\left( \frac{2N-1}{4}\right) = \frac{N^2}{2} -\frac{1}{8}.	
\end{equation}
Thus, 
\begin{equation}
	\dSSEP{N}{L} \le  \frac{NL}{2} -f_2(k) = \frac{N(L-N)}{2} + \frac{1}{8}. 
\end{equation}
Since $N(L-N)$ is even,  $\frac{N(L-N)}{2}$ is
the largest integer below $\frac{N(L-N)}{2} + \frac{1}{8}$. 
Since any diameter is an integer, we have effectively
\begin{equation}
	\dSSEP{N}{L}  \le \frac{N(L-N)}{2} = \left\lceil \frac{N(L-N)}{2}
\right\rceil,
\end{equation}
establishing the upper bound of \eq{equ:SSEPPeriodic} for the periodic \SSEP.

Finally we show the lower bound of \eq{equ:SSEPPeriodic} by providing
two configurations that achieve this diameter.
For simplicity, we suppose $N$ and $L$ to be even
(but the following argument generalizes to arbitrary integers  $N$ and $L> N$).
We show that for the periodic \SSEP, moving from configuration $\Atilde$ to configuration $\Atilde_L$, which are shown below,
takes at least  $\lceil \frac{N(L-N)}{2} \rceil$ steps.
\begin{equation}
\Atilde = \overset{1}{\ONE} \ONE \DOTS \overset{N}{\ONE} \ZERO  \DOTS \ZERO
\overset{L}{\ZERO}, \quad \Atilde_L= \ZERO \ZERO
\DOTS \ZERO
\overset{\NEG L/2 + 1 \NEG}{\ONE} \DOTS
\ONE \overset{\NEG L/2 + N \NEG}{\ONE} \ZERO \DOTS \ZERO.
\end{equation}
For a sequence $\mathcal{S}$ of periodic \SSEP moves that lead $\Atilde$ to $\Atilde_L$ 
and for a particle at poition $L/2 + i$ in $\Atilde_L$, we can trace back its
location in $\Atilde$, denoted by $\mathcal{S}_i$.
Crucially, for any $\mathcal{S}$ between $\Atilde$ and
$\Atilde_L$,
there exists a particle $k\in [1,N]$ in $\Atilde_L$ such that
\begin{equation}
	\label{equ:Spart1}
	\mathcal{S}_1 + (k-1) = \mathcal{S}_2 + (k-2)  = \mathcal{S}_3+ (k-3) = \dots = \mathcal{S}_{k}= N,
\end{equation}
and moreover if $k < N$ then 
\begin{equation}
	\label{equ:Spart2}
	1 = \mathcal{S}_{k+1} = \mathcal{S}_{k+2} - 1 =  \dots = \mathcal{S}_{N} - (N - k - 1).
\end{equation}
Accounting for \eqtwo{equ:Spart1}{equ:Spart2},
the number of moves required for $\mathcal{S}$ is at least
\begin{align}
	&\sum_{i=1}^{k} \left[\left(\frac{L}{2} + i\right) - \mathcal{S}_i \right] + 
	\sum_{i=k+1}^{N} \left[\mathcal{S}_i + L - \left(\frac{L}{2} + i\right)  \right]\\
	&= \sum_{i=1}^{k} \left[\left(\frac{L}{2} + i\right) - (N -k + i) \right] + 
	\sum_{i=k+1}^{N} \left[i - k + L - \left(\frac{L}{2} + i\right)  \right] \\
	&= \sum_{i=1}^{k} \left[\frac{L}{2}  - (N-k)  \right] + 
	\sum_{i=k+1}^{N} \left[\frac{L}{2} -k  \right] \\
	& = \frac{NL}{2} - 2k(N-k). \label{eq:Sexp4}
\end{align}  
Note that $k(N-k)$ as a  function of $k$ takes its maximum when $k = \frac{N}{2}$.
Thus, we obtain a lower bound of \eq{eq:Sexp4} 
\begin{equation}
	\frac{NL}{2} - 2k(N-k) \ge \frac{NL}{2} - 2 \left( \frac{N}{2}\right)^2 = 
	\frac{N(L-N)}{2} =  \left\lceil \frac{N(L-N)}{2} \right\rceil.
\end{equation}  
Therefore, $\dSSEP{N}{L}(\Atilde, \Atilde_L) \ge \lceil \frac{N(L-N)}{2} \rceil$, 
and we complete the lower bound of \eq{equ:SSEPPeriodic}.
\end{proof}
\begin{remark}
The proof of \theo{thm:diamSSEP} is constructive. It thus provides
an algorithm for transforming any configuration $x$ of the SSEP with
periodic boundary conditions into  a configuration $y$
in at most $\dSSEP{N}{L}$ \SSEP moves, that first identifies a periodic
interval of appropriate length.
\end{remark}

\subsection{Upper bounds for the diameter of the lifted TASEP}

With the diameter defined as  in  \eq{equ:DiameterDefinition} and applied to
the lifted \TASEP and its sample space $\OmegaLTASEP$,
the following Conjecture~\ref{conj:TASEP} is obtained from exact enumeration for
small
$N$ and $L$ (see \tab{tab:DiameterLTASEP}).
This conjecture suggests that for large $N$ and $L$ with $N/L =
\const$, the diameter of the lifted \TASEP (which is inherently periodic) is
roughly double that of the \SSEP with periodic boundary
conditions.
This is remarkable because the lifted \TASEP has only two possible moves per
(lifted) configuration, whereas the \SSEP has $2N$ moves at its disposal, at
each time step.

\begin{conjecture}
	\label{conj:TASEP}
For $N<L$,
the diameter $\dLTASEP_{N, L}$ of the lifted \TASEP of $N$ particles  on $L$
sites satisfies
\begin{equation}
   \dLTASEP_{N, L} = N (L-N) + 2 N - 3.
\label{equ:LTASEPFamousFormula}
\end{equation}
Furthermore, the configurations $A $ and $B$,
\begin{equation}
A = \overset{1}{\TWO} \ONE  \DOTS \overset{N}{\ONE} \ZERO \ZERO \DOTS
\overset{L}{\ZERO},\quad
B=
\overset{1}{\ZERO} \ZERO \DOTS\ZERO \overset{\NEG L-N+1\NEG}{\ONE} \DOTS
\TWO
\overset{L}{\ONE},
\label{equ:ABDefinition}
\end{equation}
uniquely realize, up to translations, the diameter, so that $\dLTASEP_{N, L} =
d(A, B) $.
\end{conjecture}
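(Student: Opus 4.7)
The plan is to prove the conjectured equality $\dLTASEP_{N,L} = N(L-N) + 2N - 3$ by matching upper and lower bounds, and then to characterize the extremal pairs. The upper bound reduces to the explicit case $A \to B$, for which I propose a four-phase construction. Phase (i): $N-1$ blocked-forward moves transport the pointer from particle $1$ (leftmost) to particle $N$ (rightmost), which is the only particle in $A$ that is not blocked. Phase (ii): $L-N-1$ full ``pullback cascades'', each a sequence of $N$ consecutive unblocked-pullback moves beginning and ending on particle $N$ (via the cyclic pullback from particle $1$ to particle $N$), shifting the packed cluster by one site to the right. Phase (iii): one ``modified cascade'' of $N$ moves, identical to a full cascade except that its last step is an unblocked-forward instead of an unblocked-pullback, completing the last site shift but leaving the pointer on particle $1$. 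Phase (iv): $N-2$ blocked-forward moves advance the pointer from particle $1$ to particle $N-1$, the pointer position required by $B$. The tally is $(N-1) + (L-N-1)N + N + (N-2) = N(L-N) + 2N - 3$. For arbitrary $X, Y$ one would use a variant of this template, shifting $X$'s particles to the target positions in an order tied to the pointer trajectory, with the boundary arguments below ensuring no other pair exceeds the bound.

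For the lower bound I would decompose each move as either an \emph{advance} (unblocked, one particle moves forward by one site) or a \emph{block} (blocked, the pointer passes to a blocking successor without any particle moving). Because a particle advances only when it carries the pointer, and each advance shifts exactly one particle by one site, the total number of advances is at least the total forward distance each particle must cover, namely $N(L-N)$ (with equality iff no particle wraps around). Two \emph{boundary overheads} then account for the remaining $2N-3$ moves. At the start, particles $1,\ldots,N-1$ in $A$ are all blocked and only particle $N$ can advance; since pullback presupposes a preceding advance, the only way to relocate the pointer from particle $1$ to particle $N$ is via $N-1$ blocked-forward steps, each incrementing the pointer's particle index by one cyclically. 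At the end, a predecessor analysis of $B$ shows that the only non-trivial move into $B$ is a blocked-forward move from a configuration with the same positions and the pointer on particle $N-2$; iterating this backwards produces a unique chain of $N-2$ blocked-forward predecessors ending at a state with $B$'s positions and the pointer on particle $1$, which itself can only be entered via an advance. The final $N-2$ moves are therefore necessarily blocks, disjoint from the initial $N-1$, and the total is at least $N(L-N) + (N-1) + (N-2) = N(L-N) + 2N - 3$.

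Uniqueness up to translation would follow by tracking the equality conditions in each inequality: tightness in the advance count forces every particle to travel exactly $L-N$ sites forward without wrapping, so that $Y$ is the $(L-N)$-shift of $X$; tightness in the initial overhead forces $X$ to be a packed cluster with the pointer on its leftmost particle, identifying $X$ as a translate $A_k$ of $A$; and the final-overhead chain analysis similarly pins down $Y$ as the corresponding translate $B_k$. The hardest step will be the final-overhead bound: the predecessor chain must be shown to be the only path into $B$, ruling out any trajectory that achieves the final pointer position $N-1$ with fewer than $N-2$ trailing blocks, for instance via clever pullbacks that simultaneously shift particles and reposition the pointer. I expect this to require either a careful inductive argument along the directed configuration graph, or a potential function on $\OmegaLTASEP$ that decreases by at most one per move and whose difference between $A$ and $B$ is exactly $N(L-N) + 2N-3$.
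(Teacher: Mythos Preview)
The statement you are attacking is a \emph{conjecture} in the paper, not a theorem: the paper does not prove \eq{equ:LTASEPFamousFormula}. What the paper does establish is (i) Lemma~\ref{lem:moveA2B}, namely $\dLTASEP(A,B)=N(L-N)+2N-3$ for the specific pair in \eq{equ:ABDefinition}, and (ii) Theorem~\ref{thm:LT}, the much weaker diameter upper bound $N(2L-N)+\bigOb{L}$. Your proposal handles (i) correctly. Your four-phase ``cascade'' construction differs from the paper's particle-by-particle scheme (the paper sends particle $N$ all the way to its target, pulls back to $N-1$, repeats) but both reach the same count $N(L-N)+2N-3$. Your lower-bound decomposition into advances versus blocks, together with the forced $N-1$ initial blocked-forward moves out of $A$ and the $N-2$ final blocked-forward moves into $B$, is exactly the paper's argument for the matching lower bound on $\dLTASEP(A,B)$.

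The genuine gap is elsewhere. To prove the conjecture you must show $\dLTASEP(X,Y)\le N(L-N)+2N-3$ for \emph{every} pair $(X,Y)\in\OmegaLTASEP\times\OmegaLTASEP$, and your proposal reduces this to a single sentence: ``For arbitrary $X,Y$ one would use a variant of this template.'' This is the hard direction; the paper's own four-stage construction (routing $x\to B'\to A\to \ConfThree_y\to y$) only achieves $N(2L-N)+\bigOb{L}$, roughly twice the conjectured value, because passing through the packed reference states $A,B'$ is wasteful for generic $X,Y$. Your cascade template does not obviously improve on this: once $X$ is not a packed block, the pullback-cascade idea no longer shifts all particles in lockstep, and the bookkeeping of how many pullbacks versus forwards are needed to route the pointer to each target position is precisely what is missing. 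You also mislocate the difficulty in your final paragraph: the predecessor chain into $B$ is straightforward (each state with the $B$ positions and pointer on particle $k<N-1$ has a unique non-trivial predecessor, namely pointer on $k-1$), and does not require an inductive argument or potential function. The open problem is the universal upper bound, and with it the uniqueness claim, since your equality-tracking argument presupposes that bound.
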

\begin{table}[h]
    \centering
		\begin{tabular}{c|cccccc}
		$L \setminus N$
			& 2 &3  &4  &5  &6   \\
			\hline
			5&  7&9 &9 &   &  \\
			6& 9 & 12 & 13 & 12 &    \\
			7& 11 & 15 & 17 & 17 & 15  \\
			8& 13 & 18 & 21 & 22 & 21  \\
			9 & 15 & 21 & 25 & 27 & 27  \\
			10& 17 & 24 & 29 & 32 & 33  \\
			11 & 19 & 27 & 33 & 37 & 39  \\
			12 & 21 & 30 & 37 & 42 & 45
	   \end{tabular}
    \caption{Diameter of the lifted \TASEP,
obtained by exact enumeration. Data are compatible with
\eq{equ:LTASEPFamousFormula}.}
\label{tab:DiameterLTASEP}
\end{table}
We show that the configurations $A$  and $B$ from \eq{equ:ABDefinition}
actually saturate the bound of \eq{equ:LTASEPFamousFormula}:
\begin{lemma}
	\label{lem:moveA2B}
	For $A$ and $B$ in \eq{equ:ABDefinition}, we have
	$\dLTASEP(A,B) = N(L-N) + 2N - 3$.
\end{lemma}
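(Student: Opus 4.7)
My plan is to establish the equality by proving matching upper and lower bounds. For the upper bound I will exhibit an explicit sequence of $N(L-N)+2N-3$ moves from $A$ to $B$; for the lower bound I will classify every move by type and show the counts sum to at least that many.

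For the upper bound, I execute the path in four phases. \textbf{Phase 1:} $N-1$ blocked forward moves shift the pointer from particle $1$ (blocked in $A$) cyclically through to particle $N$; no particle moves. \textbf{Phase 2:} $L-N-1$ full cycles, each a sequence of $N$ consecutive non-blocked pullbacks starting with the pointer on particle $N$; in each cycle the pointer sweeps backward through all particles while every particle advances by one position, ending the cycle with the pointer back on particle $N$. \textbf{Phase 3:} a partial cycle of $N-1$ non-blocked pullbacks (on particles $N,N-1,\dots,2$) followed by a single non-blocked forward of particle $1$, which deposits the pointer on particle $1$ instead of cycling back to $N$. \textbf{Phase 4:} $N-2$ blocked forward moves advancing the pointer from particle $1$ to its target $N-1$. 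Tallying, $(N-1)+N(L-N-1)+N+(N-2)=N(L-N)+2N-3$.

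For the lower bound, consider any sequence of $T$ moves from $A$ to $B$ and classify each move as non-blocked forward (F), non-blocked pullback (B), blocked forward (f), or stall (s), with counts $n_F,n_B,n_f,n_s$. Only (F) and (B) advance a particle, each by exactly one position. Since the cyclic order of particles is preserved and the minimum-displacement cyclic matching of particles of $A$ to those of $B$ sends particle $k$ to position $L-N+k$ (total displacement $N(L-N)$), we obtain $n_F+n_B\ge N(L-N)$; stalls only enlarge $T$.

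The heart of the argument is to show $n_f\ge 2N-3$. At the start, every particle of $A$ except particle $N$ is blocked, and a pullback from a blocked particle is a stall; consequently the first $N-1$ non-stall moves are forced to be of type (f), cycling the pointer to the unique unblocked particle $N$. At the end, for $N\ge 3$, I plan to argue by reverse induction that the last $N-2$ non-stall moves are also of type (f). In $B$, the pointer sits on particle $N-1$, blocked by particle $N$; enumerating the four possible types for the final move, I would check that (F), (B), and (s) each force two particles to share a site in the predecessor state, so the final move must be an (f) from particle $N-2$. Iterating the same collision analysis step by step, each of the last $N-2$ moves is forced (f), and the induction halts exactly when the blocking neighbor immediately behind becomes (cyclically) particle $N$, which in $B$ does not sit immediately behind the pointer's particle. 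For $N=2$ the final-side contribution is vacuous and the statement still follows from the start-side bound $n_f\ge N-1=2N-3$.

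Combining, since $T\ge N(L-N)+(N-1)>2N-3$ from the start-side forced (f) moves and the displacement bound, the initial and final forced (f) moves occupy disjoint positions in the sequence, giving $n_f\ge (N-1)+(N-2)=2N-3$. Thus $T\ge N(L-N)+2N-3$, matching the upper bound. The main obstacle I foresee is the backward induction: each iteration requires ruling out (F), (B), and (s) via precise collision checks that depend on the precise geometry of $B$, and one must verify that the induction terminates at exactly $N-2$ steps rather than continuing further around the cycle.
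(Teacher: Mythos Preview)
Your plan is essentially the paper's: an explicit path for the upper bound, and for the lower bound the same three ingredients (mass transport forces $n_F+n_B\ge N(L-N)$; the first $N-1$ non-stall moves out of $A$ are forced blocked-forwards; the last $N-2$ non-stall moves into $B$ are forced blocked-forwards; disjointness of these two blocks). Your upper-bound path uses backward pullback sweeps where the paper uses forward runs with a terminal pullback, but the counts agree.

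Two places where the write-up slips. First, in the reverse analysis you cannot rule out a stall (s) by a collision argument: the predecessor of $B$ under (s) is $B$ itself, which is perfectly valid. What you actually want (and said one sentence earlier) is to analyze the last \emph{non-stall} move, for which only (F), (B), (f) are candidates; your collision checks then correctly single out (f). Second, the disjointness of the initial $N-1$ and final $N-2$ forced (f) blocks does not follow from ``$T>2N-3$'' (that inequality involves $T$, which counts stalls, and in any case an inequality on the total length does not by itself prevent overlap of two sub-blocks). The clean argument is: after the first $N-1$ forced (f) moves the particles sit at $1,\dots,N$, while before the last $N-2$ forced (f) moves they sit at $L-N+1,\dots,L$; since $L>N$ these configurations differ, so at least one F/B move separates the two blocks and they are disjoint. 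With these fixes your argument goes through exactly as the paper's does.
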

Finally, we establish an upper bound of the diameter for the lifted \TASEP that
is tight up to a small
constant factor:
\begin{theorem}
The diameter of the lifted \TASEP with $N$ particles on $L > N$ sites satisfies
\begin{equation}
\dLTASEP_{N,L} = N(2L-N) + \bigOb{L},
\label{equ:LiftedBound}
\end{equation}
where $\bigOb{L}$ is a function that
increases at most linearly in $L$ and is independent of $N$.
\label{thm:LT}
\end{theorem}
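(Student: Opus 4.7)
The plan is to prove $\dLTASEP_{N,L}=N(2L-N)+\bigOb{L}$ by two separate arguments: an explicit algorithm for the upper bound, and a potential-function/displacement-counting argument for the matching lower bound. In both cases the main bookkeeping splits every time step into three productive types: (A) a non-blocked forward step, advancing the active particle by one while the pointer stays; (B) a non-blocked pullback, advancing the active particle by one while the pointer moves to the preceding particle; and (C) a blocked-forward step, leaving particles fixed while the pointer moves to the next particle. A blocked pullback is an inert move that an optimal path never uses. Let $n_A,n_B,n_C$ be the counts of the three productive types in a geodesic from $x$ to $y$.

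For the upper bound, I would design a three-phase routing scheme. Phase~1 realigns the pointer onto a canonical particle (say the leftmost) using at most $\bigOb{L}$ moves by iterating blocked-forward transitions along the particle chain, possibly after short free advances. Phase~2 processes the particles in their natural cyclic order: for $i=1,\dots,N$, push particle $i$ forward by the required amount $(y_i-x_i)\bmod L$, which costs at most $L-N$ particle advances because no particle can overtake another and the relative order is preserved. Phase~3 performs the pointer handoff between consecutive particles $i$ and $i+1$: even in the worst case this requires at most $L$ moves, consisting of free forward advances of the current active particle until it is blocked by its successor, followed by a blocked-forward transfer. Summing the three phases yields the budget
\begin{equation}
N(L-N)\;+\;(N-1)\cdot L\;+\;\bigOb{L} \;=\; N(2L-N)\;+\;\bigOb{L},
\end{equation}
which establishes the upper bound for every $x,y\in\OmegaLTASEP$.

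For the lower bound, I would introduce a potential $\Phi(z)$ on $\OmegaLTASEP$ combining two bookkeeping quantities: the total forward-displacement required to reach a canonical target from $z$ (this governs $n_A+n_B$), and an integer-valued winding count that tracks how many times the pointer has traversed the cyclic list of particles (this governs $n_C-n_B$). Each of the three move types changes $\Phi$ by a uniformly bounded amount, so the number of moves is at least $\Phi(x)-\Phi(y)-\bigOb{1}$. I would then exhibit a pair $(X,Y)$ for which the two pieces of the potential are simultaneously maximal: take $X=A$ (from \eq{equ:ABDefinition}) to force full particle displacement $N(L-N)-\bigOb{L}$ as in \lem{lem:moveA2B}, and take $Y$ to be a twist of $A$—obtained by combining a cyclic shift with a pointer relocation that cannot be realized without the pointer winding through the particle list $\Omega(L)$ extra times—forcing $n_C-n_B$ to contribute an additional $NL-\bigOb{L}$. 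Adding the two contributions yields $\Phi(X)-\Phi(Y)\ge N(2L-N)-\bigOb{L}$ and hence the matching lower bound.

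The main obstacle is the lower bound, and specifically the design of $Y$ so that particle-displacement and pointer-winding contributions cannot be amortized against one another. In principle, pullbacks (type~B) simultaneously advance a particle and retract the pointer, so an adversary might try to merge work between the two ingredients; one has to verify that the potential change per pullback remains $+1$ rather than $+2$, i.e., that for the chosen $Y$ there is no schedule in which pullbacks discharge both kinds of potential at once. I would address this by choosing $Y$ so that any pullback applied along a candidate geodesic re-creates a displacement deficit that must later be paid back by a type-A or type-C move, formalized by a monovariant that decreases by at most one per move regardless of type. Verifying this monovariant is the delicate step; the upper-bound construction of the first paragraph is then used in reverse as a sanity check that the monovariant is tight up to $\bigOb{L}$.
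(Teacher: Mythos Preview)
Your proposal contains a fundamental misreading of the statement. Despite the ``$=$'' sign, \theo{thm:LT} is purely an \emph{upper-bound} result: the paper defines the $\bigOb{L}$ term as ``a function that increases at most linearly in $L$'', and its proof only shows $\dLTASEP_{N,L}\le N(2L-N)+CL$ for an absolute constant $C$. The paper does not claim, and you should not try to prove, a matching lower bound $N(2L-N)-\bigOb{L}$. Such a lower bound would in fact contradict Conjecture~\ref{conj:TASEP} (supported by the exhaustive data in \tab{tab:DiameterLTASEP}); the best lower bound actually established is $\dLTASEP_{N,L}\ge N(L-N)+2N-3$ from \lem{lem:moveA2B}. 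For $N=L/2$, the conjectured diameter is about $L^2/4$, while $N(2L-N)=3L^2/4$; the discrepancy is quadratic in $L$, not $\bigOb{L}$. Your potential-function argument therefore cannot succeed: no target $Y$ forces $n_C-n_B$ to contribute an additional $NL$ moves, because a well-chosen path (as in the proof of \lem{lem:moveA2B}) needs only $\bigOb{N}$ pointer-handling moves beyond the $N(L-N)$ displacement cost.

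Your upper-bound construction also has a gap. You process particles in the order $i=1,\dots,N$ and hand off the pointer from $i$ to $i+1$ via blocked-forward (type~C) moves. But a type~C move requires particle $i$ to be adjacent to particle $i+1$, so to hand off you must push particle $i$ past its target $y_i$ all the way to the current position of particle $i+1$; particle $i$ is then no longer at $y_i$, and your scheme never returns to fix it. The paper avoids this by processing particles from front to back (particle $N$ first) and using \emph{pullbacks} (type~B) for the handoff: a pullback simultaneously delivers the last unit of displacement to the current particle and transfers the pointer back to the preceding one, so no move is wasted. Concretely, the paper routes every pair $(x,y)$ through canonical configurations in four stages $x\to B'\to A\to C_y\to y$; the two outer stages each cost at most $N(L-N)+\bigOb{L}$ by the pullback trick, the wrap $B'\to A$ costs $N^2$, and the final pointer fix $C_y\to y$ costs $\bigOb{L}$, giving $2N(L-N)+N^2=N(2L-N)$ plus lower-order terms.
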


\begin{proof}[Proof of \lem{lem:moveA2B}]
First, we use $N-1$ \forwardmove moves, so that $x_N = N$ becomes active:
\begin{equation}
\label{equ:ar1}
\TWO \ONE \ONE  \ZERO \ZERO \ZERO \to \ONE \ONE \TWO \ZERO \ZERO \ZERO.
\end{equation}
	Next we forward particle $N$ to the position $L-1$, followed by a
pullback move, to make particle $N-1$ active.
\begin{equation}
\ONE \ONE \TWO \ZERO \ZERO \ZERO \to \ONE \ONE \ZERO \ZERO \TWO \ZERO
\to \ONE \TWO \ZERO \ZERO \ZERO \ONE.
\end{equation}
We repeat these moves until any particle $i$ is
at position $L-N+i$. For this, we have used $N(L-N)$ moves:
\begin{equation}
\ONE \TWO \ZERO \ZERO \ZERO \ONE \to \TWO \ZERO \ZERO \ZERO \ONE \ONE
\to \ZERO \ZERO \ZERO \TWO \ONE \ONE.
\label{equ:fr1}
\end{equation}
	Finally, we need $N-2$ additional forward moves to render particle $N-1$
active without moving particles:
\begin{equation}
\ZERO \ZERO \ZERO \TWO \ONE \ONE \to \ZERO \ZERO \ZERO \ONE \TWO \ONE.
\label{equ:ar2}
\end{equation}
In total, $(N-1) + N(L-N) + (N-2) = N(L-N) + 2N - 3$ moves have led from $A$ to
$B$ without invoking periodic boundary conditions.

We have shown that $\dLTASEP(A,B) \le N(L-N) + 2N - 3$ with $A$ and $B$ from
\eq{equ:ABDefinition}. To establish the equality in this formula, we now
argue that each of the above-mentioned moves is necessary. First,
since each move can change at most one particle by one unit in position,  the
required mass
transport demands at least $N(L-N)$ moves from $A$ to $B$.
This justifies the moves in \eq{equ:fr1}. Moreover, starting from $A$, the
only way one can allow any particle to be moved is following the moves as in
\eq{equ:ar1}, which takes $N-1$ moves. Similarly, the only way that a state can
transition to $B$ is through the $N-2$ moves as used in \eq{equ:ar2}. As
\eqtwo{equ:ar1}{equ:ar2} happen respectively before and after the mass
transport, they contain no overlapping moves. Also, moves in both
\eqtwo{equ:ar1}{equ:ar2} are independent of the mass transport
since they change no particle positions. Hence, it takes at
least $N(L-N) + (N-1) + (N-2) = N(L-N) + 2N - 3$ moves to reach $B$ from $A$.
\end{proof}

\begin{proof}[Proof of \theo{thm:LT}]
Let $A$ be as defined in \eq{equ:ABDefinition} and  $B'$ as
shown below:
\begin{equation}
	B' =	\overset{1}{\ZERO} \ZERO \ZERO  \ZERO \overset{\NEG L-N+1\NEG}\ONE
\DOTS \ONE
	\overset{L}\TWO,\quad A = \overset{1}\TWO \ONE \DOTS \overset{N}\ONE \ZERO
\ZERO \ZERO \ZERO.
\end{equation}
We say that a configuration $y'$ is an \emph{ancestor} of a configuration $y$ if
$y'$ can reach $y$ by forward moves that do not cross the periodic boundary
conditions at position $L$. As a forward move advances the pointer by one unit,
a configuration with the pointer at $i$  has  $i-1 \le L-1$ ancestors.
Let $\ConfThree_y$ be an ancestor of $y$ such that the backmost particle in
$\ConfThree_y$ is active:
\begin{equation}
\text{for~~} y= \ZERO\ONE\ONE\ONE \TWO\ZERO\ONE\ZERO,\quad \ConfThree_y =
\ZERO\TWO\ONE\ONE \ONE\ZERO\ONE\ZERO.
\end{equation}
For any pair of configurations $x$ and $y$, there is a sequence of lifted-\TASEP
moves from $x$ to $y$ in four stages:
\begin{equation}
\underbrace{x \to B'}_{\text{stage 1}} \quad
\underbrace{B' \to A}_{\text{stage 2}}  \quad
\underbrace{A \to \ConfThree_y} _{\text{stage 3}}\quad
\underbrace{\ConfThree_y \to y}_{\text{stage 4}}.
\end{equation}
In stage 1, we use moves analogous to those in the proof of \lem{lem:moveA2B}.
Each particle costs at most $L-N$ moves. Also at the beginning and at the
end, $\bigOb{L}$ moves send the \activity to the proper particle.
Hence, there are $N(L-N) + \bigOb{L}$ moves in stage 1:
\begin{align}
	x= &\ZERO\ZERO\ONE\ZERO\TWO\ONE\ZERO\ONE \to
	\ZERO\ZERO\ONE\ZERO\ONE\TWO\ZERO\ONE \to
	\ZERO\ZERO\ONE\ZERO\TWO\ZERO\ONE\ONE \notag \\
	\to
	&\ZERO\ZERO\TWO\ZERO\ZERO\ONE\ONE\ONE \to
	\ZERO\ZERO\ZERO\ZERO\TWO\ONE\ONE\ONE \to
	\ZERO\ZERO\ZERO\ZERO\ONE\ONE\ONE\TWO = B'.
\end{align}
Given the periodic boundary conditions, stage 2 is standard and costs $N^2$
moves.
In stage 3, we forward each particle to its position in $\ConfThree_y$,
and repeat this operation at most $N$ times,
starting from the frontmost particle:
\begin{equation}
\label{eq:A9}
A = \TWO \ONE \ONE \ONE \ZERO \ZERO \ZERO \ZERO \to
\ONE \ONE \ONE \TWO \ZERO \ZERO \ZERO \ZERO \to
\ONE \ONE \TWO \ZERO \ZERO \ZERO \ONE \ZERO
\end{equation}
\begin{equation}
\label{eq:A10}
\to \ONE \TWO  \ZERO \ZERO \ZERO \ONE \ONE \ZERO \to
 \TWO\ZERO\ONE\ZERO\ZERO\ONE\ONE\ZERO  = \ConfThree_y.
\end{equation}
Lastly, in stage 4, by the definition of the ancestor, $\ConfThree_y$ reaches $y$
after at most $L-1$ forward moves. Therefore, $x$ reaches $y$ in
\begin{equation}
N(L-N) + \bigOb{L} + N^2 + N(L-N) + \bigOb{L} + \bigOb{L} = N(2L-N) + \bigOb{L}
\end{equation}
moves, which establishes \eq{equ:LiftedBound}.
\end{proof}

\section{Discussion}
\label{sec:Outlook}

The discussed diameter bounds are compatible with the
proven~\cite{Lacoin2016detailed,Lacoin_2017_SSEP} or
conjectured~\cite{Essler2023lifted} mixing times for the \SSEP and the lifted
\TASEP, respectively. In particular, the diameter of the lifted \TASEP, a first
geometrical characteristic of its sample space, sets a lower bound on its mixing
time, which appears to be sharp. Given the importance of the model, it would be
interesting to obtain other geometrical characteristics and, in first place, the
$\alpha$-dependent conductance of the underlying graph. At present, the
inverse-gap scaling for the lifted \TASEP is known only from the Bethe ansatz,
and it shows a pronounced dependence on $\alpha$. It will be fascinating to
understand whether this $\alpha$ dependence is reflected in the basic geometry
of the sample space and whether it can be obtained rigorously.

\section*{Acknowledgement}
\thanks{We thank A. C. Maggs for helpful discussions. We thank
the mathematical research institute MATRIX in Australia where this research was
initiated.
}


\bibliography{General.bib,LTASEPDiameterArxiv.bib}

\end{document}